\DeclareMathOperator{\ef}{\mathbb{F}}
\DeclareMathOperator{\enn}{\mathbb{N}}
\DeclareMathOperator{\zed}{\mathbb{Z}}
\DeclareMathOperator{\lc}{\leqslant_{\rm lex}}
\newtheorem{thm}{Theorem}[section]
\newtheorem{prop}[thm]{Proposition}
\theoremstyle{definition}
\theoremstyle{remark}
\numberwithin{equation}{section}
\begin{document}

\setcounter{page}{1}

\title[Finding a Generator Matrix]{Finding a Generator Matrix of a Multidimensional Cyclic Code}
\author[R. Andriamifidisoa, R. M. Lalasoa and T. J. Rabeherimanana]{Ramamonjy Andriamifidisoa$^{*}$, Rufine Marius Lalasoa and Toussaint Joseph Rabeherimanana}
\thanks{{\scriptsize
\hskip -0.4 true cm MSC(2010): primary: 13F20, 16D25; Secondary: 94B60
\newline Keywords: quotient ring, lexicographic order, ideal basis, multicyclic code,  generator matrix\\
$*$ Corresponding author }}
\begin{abstract}
We generalize Sepasdar's   method for finding a gene-
\\rator matrix  of two-dimensional cyclic codes  to   find an independent subset of a general multicyclic code, which may form a basis of the code as a vector subspace. A generator matrix can be then constructed from this basis. \end{abstract}

\maketitle
\section{Introduction}
    Sepasdar, in \cite{z0} presented  a method to find a generator matrix of two dimensional skew cyclic Codes.  Then, Sepasdar and Khashyarmanesh, in \cite{z2} gave a method to find a generator matrix of some class of two-dimensional cyclic codes. Finally,  Sepasdar, in \cite{z1}, found a method to construct a generator matrix for general two-dimensional cyclic codes. In this paper,  we will generalize  this  Sepasdar's  method for a general multicyclic code.\ Our method uses an ideal basis of the code whose construction  was presented by Lalasoa et al.  in \cite{lar}.\\

In section \ref{pre} of this paper, we recall the notations used in \cite{lar} and the mathematical tools we will need, including two orderings : the partial ordering ``$\leqslant_+$" and the well ordering ``$\lc$". This latter allows to define degrees of polynomials in the quotient-ring with a special property, given by Proposition \ref{deg-prod}.\\

In section \ref{res}, we present our results. Proposition \ref{G-set} gives an idea of how a basis of the  multicyclic code, considered as a vector space will look like.\ The main result is Theorem \ref{gen-s}, which allows the construction of an independent subset of the code. Once a basis is found, one can then construct a generator matrix by forming the matrix whose rows are  the coefficients of the polynomials of the basis.

\section{Notations and Preliminaries}\label{pre}
We briefly recall the notations which are used in  \cite{lar}.   We denote the quotient ring  $\ef_q[X_1,\ldots, X_s]/\langle X_1^{\rho_1}-1,\ldots,X_s^{\rho_s}-1\rangle$ by $\ef_q[x_1,\ldots,x_s]$,  where  $\ef_q$ is the finite field with $q$ element, $x_i$ the residue class of $X_i$ modulo the ideal $\langle X_1^{\rho_1}-1,\ldots,X_s^{\rho_s}-1\rangle$.
We have
\begin{align}\label{x-puiss-rho}
 x_i^{\rho_i}=1,
\end{align}
so that
\begin{equation}\label{x-puiss}
  x_i^{m}=x_i^{m \mod \rho_i}\quad\text{for}\quad m\in\enn\quad \text{and}\quad  i=1,\ldots,s,
\end{equation}
where $m\mod \rho_i$ is the remainder of $m$ by the euclidean division of $m$ by $\rho_i$.

The additive product group $\mathcal{G}_s$ is defined by
\begin{equation*}
\mathcal{G}_s = \zed/\rho_1\zed\times\ldots\times\zed/\rho_s \zed,\\
\end{equation*}
with
\begin{equation*}
 \zed/\rho_i\zed=\{0,1,\ldots,\rho_i-1\}.
\end{equation*}
An element of $\ef_q[x_1,\ldots,x_s]$ is of the form
\begin{equation}\label{f}
  f(x_1,\ldots,x_s) =\sum_{(\alpha_1,\ldots,\alpha_s)\in\mathcal{G}_s}f_{(\alpha_1,\ldots,\alpha_s)}x_1^{\alpha_{1}}\cdots x_s^{\alpha_{s}}.
\end{equation}
For sake of simplicity, we denote $(\alpha_1,\ldots,\alpha_s)\in\mathcal{G}_s$ or, more generally, $(\alpha_1,\ldots,\alpha_s)\in\enn^s$  by $\alpha$. Then  \eqref{f} can then be written as
a
\begin{equation}\label{pol}
  f(x)=\sum_{\alpha\in\mathcal{G}_s}f_\alpha x^\alpha,
\end{equation}
where
\begin{equation}\label{x-puiss-alpha}
  x^\alpha = x_1^{\alpha_{1}}\cdots x_s^{\alpha_{s}},
\end{equation}
and we may omit the set $\mathcal{G}_s$.
For $\alpha\in\enn^s$, we also adopt the notation
\begin{equation*}  \alpha \mod \rho =(\alpha_1 \mod\rho_1,\ldots,\alpha_s\mod\rho_s)\in\mathcal{G}_s,
\end{equation*}
where $\rho=(\rho_1,\ldots,\rho_s)$. Equations  \eqref{x-puiss-rho} and \eqref{x-puiss} are then  ``generalized" to the following:
\begin{equation}\label{mult-x-rho}
  x^{\rho}=1\quad\text{and}\quad  x^\alpha= x^{\alpha\mod \rho}.
\end{equation}

The set $\enn^s$, and therefore also he product   group $\mathcal{G}_s$ is provided with two orders : a partial ordering $\leqslant_+$ defined by
\begin{equation*}
\alpha\leqslant_+\beta\iff\alpha_i\leqslant\beta_i\quad \text{for}\quad i=1,\ldots,s,
\end{equation*}
and a  \textit{well ordering} $\lc$ (the ``lexicographical ordering'), defined by
\begin{equation*}
\alpha <_{\rm lex}\beta\Longleftrightarrow\text{for the first index $i$ such that\;$\alpha_i\neq\beta_i$, one has \;$\alpha_i<\beta_i$}.
\end{equation*}
Put $n=\rho_1\cdots\rho_s$. We then may write  $\mathcal{G}_s=\{\alpha^{(1)},\ldots,\alpha^{(i)},\ldots,\alpha^{(n)}\}$ with
\begin{equation}\label{G-ordered}
\alpha^{(1)}<\cdots\alpha^{(i)}<\cdots<\alpha^{(n)}
\end{equation}
and the polynomial $f(x)$ in \eqref{pol} can  be written as
\begin{equation}\label{f-ordered}
  f(x)= f_{\alpha^{(1)}}x^{\alpha^{(1)}}+\cdots+ f_{\alpha^{(i)}}x^{\alpha^{(1)}}+\cdots+f_{\alpha^{(\rho_s)}}x^{\alpha^{(n)}}.
\end{equation}
If $f(x)$ is non-zero, we may define its \textit{degree,} denoted $\deg f(x)$ or simply $\deg f$ as
\begin{equation}\label{deg-pol}
 \deg f = \max_{\lc}\{\alpha^{(i)}\;\vert\;f_{\alpha^{(i)}}\neq 0\}.
\end{equation}
(Note that it is the usual definition of the degree of  a multivariate polynomial). However, due to equations \eqref{mult-x-rho}, for two polynomials $f$ and $g$ of $\ef_q[x_1,\ldots,x_s]$, the equality $\deg(fg)=\deg f+\deg g$ does not necessarily hold. The following proposition gives a sufficient condition for this property.
\begin{prop}\label{deg-prod}If $f$ and $g$ are non-zero elements of $\ef_q[x_1,\ldots,x_s]$ such that $\deg f+\deg g<_{+}\rho$, then $\deg(fg)=\deg f+\deg g$.
\end{prop}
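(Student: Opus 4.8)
The plan is to compute the coefficients of $fg$ directly in the basis $\{x^\delta : \delta\in\mathcal{G}_s\}$ and show that the lexicographically largest surviving monomial is precisely $x^{\alpha+\beta}$, where I abbreviate $\alpha=\deg f$ and $\beta=\deg g$. First I would record the one structural fact that drives everything: $\lc$ is a \emph{monomial order} on $\enn^s$, i.e.\ it is total and translation invariant, so that $\mu\lc\mu'$ implies $\mu+\nu\lc\mu'+\nu$ for every $\nu$ (strictly when $\mu<_{\rm lex}\mu'$), and $0<_{\rm lex}\lambda$ for every $\lambda\geqslant_+0$ with $\lambda\neq0$. Writing $f=\sum_{\mu}f_\mu x^\mu$ and $g=\sum_\nu g_\nu x^\nu$ with exponents ranging over $\mathcal{G}_s$, the definition \eqref{deg-pol} of the degree says exactly that $f_\mu\neq0\Rightarrow\mu\lc\alpha$ and $g_\nu\neq0\Rightarrow\nu\lc\beta$; applying translation invariance in each factor then yields $\mu+\nu\lc\alpha+\beta$ for every pair contributing to $fg$, with equality only when $(\mu,\nu)=(\alpha,\beta)$.

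Next I would use \eqref{mult-x-rho} to expand $fg=\sum_{\mu,\nu}f_\mu g_\nu\, x^{(\mu+\nu)\mod\rho}$, so that for a fixed $\delta\in\mathcal{G}_s$ the coefficient of $x^\delta$ is the sum of the $f_\mu g_\nu$ over all pairs with $(\mu+\nu)\mod\rho=\delta$. Since each $\mu_i,\nu_i\in\{0,\ldots,\rho_i-1\}$, in every coordinate $(\mu_i+\nu_i)\mod\rho_i$ equals either $\mu_i+\nu_i$ or $\mu_i+\nu_i-\rho_i$; collecting the coordinates where this latter ``carry'' occurs into a set $S$, this reads $\mu+\nu=\delta+\rho_S$, where $\rho_S=\sum_{i\in S}\rho_i e_i$ and $e_i$ is the $i$-th standard basis vector of $\enn^s$. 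The role of the hypothesis $\deg f+\deg g<_+\rho$ becomes visible here: it forces $(\alpha+\beta)_i<\rho_i$ in each coordinate, hence $(\alpha+\beta)\mod\rho=\alpha+\beta$, so $x^{\alpha+\beta}$ is already a reduced basis monomial and $\alpha+\beta$ is a legitimate value of $\delta$.

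The decisive step is to examine any $\delta\in\mathcal{G}_s$ with $\alpha+\beta\lc\delta$ and show its carry set must be empty. Indeed, if $S\neq\emptyset$ then $\rho_S\geqslant_+0$ and $\rho_S\neq0$, so $0<_{\rm lex}\rho_S$ and therefore $\mu+\nu=\delta+\rho_S>_{\rm lex}\delta$; combined with $\alpha+\beta\lc\delta$ this gives $\alpha+\beta<_{\rm lex}\mu+\nu$, contradicting $\mu+\nu\lc\alpha+\beta$ from the first paragraph. Hence $S=\emptyset$, whence $\delta=\mu+\nu\lc\alpha+\beta$; together with $\alpha+\beta\lc\delta$ this forces $\delta=\alpha+\beta$ and then $(\mu,\nu)=(\alpha,\beta)$. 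Consequently every monomial strictly above $\alpha+\beta$ has zero coefficient in $fg$, while the coefficient of $x^{\alpha+\beta}$ receives the single nonzero term $f_\alpha g_\beta\neq0$; by \eqref{deg-pol} this says exactly that $\deg(fg)=\alpha+\beta=\deg f+\deg g$.

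I anticipate the main obstacle to be precisely the interaction between the modular wrap-around and the lexicographic order: a priori, reducing a high exponent modulo $\rho$ could create a lexicographically large monomial and so either cancel the intended leading term or produce an even larger one. The resolution is the observation that a genuine carry \emph{strictly raises} the lexicographic value of the true exponent $\mu+\nu$, so it cannot occur among pairs whose product exponent is already pinned below $\alpha+\beta$; making this quantitative through the vector $\rho_S$ and the translation invariance of $\lc$ is the technical heart, and checking strictness in the boundary case $\mu+\nu=\alpha+\beta$ is the single place that requires care.
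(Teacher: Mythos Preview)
Your proof is correct and in fact more careful than the paper's own argument. The paper proceeds in one stroke: it asserts that for every pair of exponents $\alpha,\beta$ in the supports of $f,g$ one has $\alpha+\beta\leqslant_+\deg f+\deg g<_+\rho$, concludes that no reduction modulo $\rho$ ever takes place, and then reads off $\deg(fg)=\max_{\lc}(\alpha+\beta)=\deg f+\deg g$. The inequality $\alpha+\beta\leqslant_+\deg f+\deg g$ is not justified, however, because $\deg f$ is defined as the \emph{lexicographic} maximum of the support, and an exponent $\alpha$ with $f_\alpha\neq 0$ need not satisfy $\alpha\leqslant_+\deg f$; already for $s=2$ the polynomial $f=x_1+x_2^{\rho_2-1}$ has $\deg f=(1,0)$ while the exponent $(0,\rho_2-1)$ is not $\leqslant_+(1,0)$. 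Your route is genuinely different: rather than claiming that wrap-around never occurs, you allow it and show, via the carry vector $\rho_S$ and the translation invariance of $\lc$, that any wrapped pair lands at a reduced exponent $\delta<_{\rm lex}\alpha+\beta$, so that the coefficient at $\alpha+\beta$ receives only the single term $f_\alpha g_\beta$ and nothing lexicographically larger survives. You use the hypothesis $\deg f+\deg g<_+\rho$ only to place $\alpha+\beta$ inside $\mathcal{G}_s$, which is precisely what is needed. In short, the paper's approach is brief but rests on an unproved componentwise bound; yours is longer but actually closes that gap.
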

\begin{proof} Write $f(x_1,\ldots,x_s)=\sum_{\alpha} f_\alpha x^\alpha$ and $ g(x_1,\ldots,x_s)=\sum_{\beta}g_\beta x^\beta$. Then, using the second equation of \eqref{mult-x-rho}, we have
\begin{align*}
f(x_1,\ldots,x_s)g(x_1,\ldots,x_s)&= \sum_\alpha\sum_\beta\ f_\alpha g_\beta x^{(\alpha+\beta)\mod\rho}\\
&=\sum_\alpha\sum_\beta\ f_\alpha g_\beta x^{(\alpha+\beta)}
\end{align*}
since $\alpha+\beta
\leqslant_+\deg f+\deg g<_+\rho=(\rho_1,\ldots,\rho_s)$ for all $\alpha$ and $\beta$. Thus
\begin{equation*}
  \deg(fg)=\max_{\lc}(\alpha+\beta)=\deg f+\deg g.
\end{equation*}
\end{proof}
All the previous results are also true for  the quotient ring
\begin{equation*}
S = \ef_q[X_1,\ldots, X_s]/\langle X_1^{\rho_1}-1,\ldots,X_{s-1}^{\rho_{s-1}}-1\rangle =\ef_q[x_1,\ldots,x_{s-1}],
\end{equation*}
with $s-1$ variables, where $x_i$ is the residue  class of  $x_i$   modulo the ideal $\langle X_1^{\rho_1}-1,\ldots,X_{s-1}^{\rho_{s-1}}-1\rangle $. Note that we have used the same notation $x_i$, because the residue class of $x_i$ modulo the ideal $\langle X_1^{\rho_1}-1,\ldots,X_{s-1}^{\rho_{s-1}}-1\rangle$ may be identified with its class modulo the ideal $\langle X_1^{\rho_1}-1,\ldots,X_s^{\rho_s}-1\rangle$, (cf. Proposition 2.2, \cite{lar}).\\

A\textit{ multicyclic code} is an ideal of $R$.\\

Let $I$ be a non-zero ideal of $R$ and
\begin{equation}\label{Ideal-base}
\mathfrak{B}=\{\mathfrak{p}^{(0)}_1,\ldots,\mathfrak{p}^{(0)}_{r_1},\mathfrak{p}^{(1)}_1,\ldots,\mathfrak{p}^{(1)}_{r_1},\ldots,\mathfrak{p}^{(i)}_1,\ldots,\mathfrak{p}^{(i)}_{r_i},\ldots,\mathfrak{p}^{(\rho_s-1)}_1,\ldots,\mathfrak{p}^{(\rho_s-1)}_{r_{\rho_s-1}}\}
\end{equation}
the basis of   $I$, found by Lalasoa et al. by the method in \cite{lar}. Then an element $f(x_1,\ldots,x_s)\in R$ may be written as
\begin{equation}\label{I-expr}
f(x_1,\ldots,x_s) = \sum_{i=0}^{r_{\rho_s-1}}\sum_{j=1}^{r_j} q_j^{(i)}(x_1,\ldots,x_{s-1})\mathfrak{p}_{j}^{(i)}(x_1,\ldots,x_s).
\end{equation}
 Note that in \eqref{I-expr}, the coefficients of the polynomials in $\mathfrak{B}$ are polynomials in $S$.\\

\section{Results}\label{res}
Our aim in this section is to construct a basis of  $I$, as an $\ef_q$-vector subspace of $R$ (an $\ef_q$-basis),  from the ideal basis $\mathfrak{B}$ of $I$, in \eqref{Ideal-base}.

\begin{prop}\label{G-set}The set
\begin{equation*}
\mathfrak{B} '= \{x_1^{\alpha_1}\cdots x_{s-1}^{\alpha_{s-1}}\mathfrak{p}\;|\;(\alpha_1,\ldots,\alpha_{s-1})\leqslant_+(\rho_1,\ldots,\rho_{s-1})\;\text{and}\; \mathfrak{p}\in\mathfrak{B}\}
\end{equation*}
is a generating set of $I$, as an $\ef_q$-vector space.
\end{prop}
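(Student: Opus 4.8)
The plan is to establish the two inclusions $\operatorname{span}_{\ef_q}\mathfrak{B}'\subseteq I$ and $I\subseteq\operatorname{span}_{\ef_q}\mathfrak{B}'$, whose conjunction is exactly the assertion that $\mathfrak{B}'$ generates $I$ as an $\ef_q$-vector space. The first inclusion is immediate: every element of $\mathfrak{B}'$ has the form $x_1^{\alpha_1}\cdots x_{s-1}^{\alpha_{s-1}}\mathfrak{p}$ with $\mathfrak{p}\in\mathfrak{B}\subseteq I$, and since $I$ is an ideal it is stable under multiplication by the monomial $x_1^{\alpha_1}\cdots x_{s-1}^{\alpha_{s-1}}\in R$; as $I$ is in particular an $\ef_q$-subspace, every $\ef_q$-linear combination of such products again lies in $I$.

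For the reverse inclusion I would start from an arbitrary $f\in I$ and invoke the expansion \eqref{I-expr} afforded by the ideal basis $\mathfrak{B}$, namely
\begin{equation*}
f=\sum_{i,j} q_j^{(i)}(x_1,\ldots,x_{s-1})\,\mathfrak{p}_j^{(i)}(x_1,\ldots,x_s),
\end{equation*}
the decisive feature being that each coefficient $q_j^{(i)}$ lies in $S=\ef_q[x_1,\ldots,x_{s-1}]$, i.e.\ involves only the first $s-1$ variables. Expanding each $q_j^{(i)}$ into its monomials and using $x_i^{\rho_i}=1$ from \eqref{mult-x-rho} to reduce every exponent modulo $\rho_i$, I can write
\begin{equation*}
q_j^{(i)}=\sum_{(\alpha_1,\ldots,\alpha_{s-1})}c^{(i,j)}_{\alpha}\,x_1^{\alpha_1}\cdots x_{s-1}^{\alpha_{s-1}},\qquad c^{(i,j)}_{\alpha}\in\ef_q,
\end{equation*}
where each $\alpha_k$ satisfies $0\le \alpha_k\le\rho_k-1$, so that $(\alpha_1,\ldots,\alpha_{s-1})\leqslant_+(\rho_1,\ldots,\rho_{s-1})$. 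Substituting back and collecting terms, $f$ becomes an $\ef_q$-linear combination of the products $x_1^{\alpha_1}\cdots x_{s-1}^{\alpha_{s-1}}\mathfrak{p}_j^{(i)}$, each of which is precisely an element of $\mathfrak{B}'$; hence $f\in\operatorname{span}_{\ef_q}\mathfrak{B}'$.

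The only real subtlety — and the point on which the whole statement turns — is that the coefficients in \eqref{I-expr} belong to $S$ rather than to the full ring $R$: this is exactly what guarantees that only monomials in $x_1,\ldots,x_{s-1}$ appear, so that no power of $x_s$ ever multiplies a basis element $\mathfrak{p}$. Were the coefficients allowed to depend on $x_s$, the resulting products would fall outside $\mathfrak{B}'$ and the set would no longer suffice. Thus the argument is really a bookkeeping of the expansion, resting entirely on the $S$-linearity furnished by the ideal basis of \cite{lar}, while the finiteness of the monomial index set (guaranteed by $x_i^{\rho_i}=1$) ensures that the resulting combination is a genuine finite $\ef_q$-linear combination. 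Note that only spanning is claimed here; linear independence is deferred to Theorem \ref{gen-s}.
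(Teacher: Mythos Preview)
Your argument is correct and follows exactly the approach of the paper: invoke the expansion \eqref{I-expr}, expand each coefficient $q_j^{(i)}\in S$ into monomials in $x_1,\ldots,x_{s-1}$ with exponents bounded by $(\rho_1,\ldots,\rho_{s-1})$, and observe that the result is an $\ef_q$-linear combination of elements of $\mathfrak{B}'$. Your write-up is in fact more careful than the paper's, since you also note the easy inclusion $\operatorname{span}_{\ef_q}\mathfrak{B}'\subseteq I$ and make explicit the role of the relations $x_i^{\rho_i}=1$ in bounding the exponents.
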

\begin{proof}It suffices to use \eqref{I-expr} and write\begin{equation*}
q_j^{(i)}(x_1,\ldots,x_{s-1})=\sum_{(\alpha_1,\ldots,\alpha_{s-1})\leqslant_+(\rho_1,\ldots,\rho_{s-1})}q^{(i)}_{j\alpha_1,\ldots,\alpha_{s-1}}x_1^{\alpha_1}\cdots x_{s-1}^{\alpha_{s-1}},
\end{equation*}
where $q^{(i)}_{j\alpha_1,\ldots,\alpha_{s-1}}\in\ef_q$. Then the polynomial  $f$ is written as a linear combination of elements of $\mathfrak{B}'$, with coefficients in $\ef_q$.
\end{proof}
The set $\mathfrak{B}'$  in \ref{G-set} may be too large to be an $\ef_q$- basis of $I$. In other words, the elements of $\mathfrak{B}$ may be linearly dependent.
If this is the case, an $\ef_q$ - basis $B$ of $I$  should be then extracted from $\mathfrak{B}'$.\\

We will find linearly independent elements of $\mathfrak{B}'$ and check whether they form an $\ef_q$ -base of $I$.

According to the notations in \eqref{Ideal-base}, we choose polynomials
\begin{equation*}
  \mathfrak{p}_0(x_1,\ldots,x_s), . . . , \mathfrak{p}_{\rho_s-1}(x_1,\ldots,x_s),
\end{equation*}
where $\mathfrak{p}_k\in\{\mathfrak{p}_1^{(k)},\ldots,\mathfrak{p}^{(k)}_{r_{\rho_s-1}}\}\subset I_{k} $. Let $p_k(x_1,\ldots,x_{s-1})\in S$ be  the coefficient of $\mathfrak{p}_k$ with respect to $x_s^k$ and
  $a_k=\deg p_k$, where the degree is defined as in \eqref{deg-pol}, but, now, in the quotient ring $S$. We have
\begin{equation}\label{pk}
\mathfrak{p}_k(x_1,\ldots,x_s) = \sum_{h=k}^{\rho_{s-1}} p_h^h(x_1,\ldots,x_{s-1})x_s^h,
\end{equation}
with $p_h^h\in S$ and $p_k^k=p_k$.

 \begin{prop}\label{comb-nul-lemme}Let   $l_0(x_1,\ldots,x_{s-1}),\ldots,\l_{\rho_{s-1}}(x_1,\ldots,x_{s-1})$ be polynomials in $\ef_q[x_1,\ldots,x_{s-1}]$ such that $\deg(l_k)<_{+}[(\rho_1,\ldots,\rho_{s-1})-(a_k)]$. Then\begin{align*}
  \sum_{k=0}^{\rho_{s-1}}l_k(x_1,\ldots,x_{s-1})\mathfrak{p}_k(x_1,\ldots,x_{s-1})=0 \Longrightarrow &l_k(x_1,\ldots,x_{s-1})=0\\
&\text{for} \;k=0\ldots,\rho_{s-1}.
\end{align*}
\end{prop}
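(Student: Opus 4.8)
The plan is to read off the coefficients of the successive powers of $x_s$ in the assumed relation and to treat the resulting equalities in $S$ one at a time, starting from the lowest power. First I would substitute the expansion \eqref{pk} of each $\mathfrak{p}_k$ into the hypothesis. Every $l_k$ lies in $S$ and hence is free of $x_s$, while all exponents of $x_s$ occurring in the $\mathfrak{p}_k$ are strictly below $\rho_s$, so no reduction $x_s^{\rho_s}=1$ is triggered; since $1,x_s,\ldots,x_s^{\rho_s-1}$ are linearly independent over $S$, the relation $\sum_k l_k\mathfrak{p}_k=0$ breaks up into the system
\[
\sum_{k=0}^{h} l_k\,c_{k,h}=0\quad\text{in }S,\qquad h=0,1,\ldots,\rho_s-1,
\]
where $c_{k,h}\in S$ denotes the coefficient of $x_s^h$ in $\mathfrak{p}_k$. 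Here the sum runs only up to $k=h$, because $c_{k,h}=0$ for $k>h$ (the lowest $x_s$-degree of $\mathfrak{p}_k$ is $k$), and $c_{k,k}=p_k\neq 0$.

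The crux, and the reason the degree hypothesis cannot be dropped, is that $S$ is \emph{not} an integral domain: the relations $x_i^{\rho_i}=1$ create zero divisors, so an equality such as $l_0 p_0=0$ does not on its own yield $l_0=0$. I would get around this by induction on $h$. For $h=0$ the system reads $l_0 p_0=0$. Suppose $l_0\neq 0$; since $\deg l_0<_+(\rho_1,\ldots,\rho_{s-1})-a_0$, we have $\deg l_0+\deg p_0=\deg l_0+a_0<_+(\rho_1,\ldots,\rho_{s-1})$, so Proposition \ref{deg-prod}, applied in the ring $S$, gives $\deg(l_0 p_0)=\deg l_0+a_0$. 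Then $l_0 p_0$ would have a well-defined degree and so be nonzero, contradicting $l_0 p_0=0$; hence $l_0=0$.

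For the inductive step I would assume $l_0=\cdots=l_{h-1}=0$. Then the $h$-th equation collapses to $l_h p_h=0$, and the bound $\deg l_h<_+(\rho_1,\ldots,\rho_{s-1})-a_h$ lets me repeat the degree argument verbatim through Proposition \ref{deg-prod} to conclude $l_h=0$. Letting $h$ range from $0$ to $\rho_s-1$ then forces $l_k=0$ for all $k$, which is the assertion. I expect the only genuinely delicate point to be the non-vanishing of each product $l_k p_k$; the degree hypothesis is tailored precisely so that Proposition \ref{deg-prod} supplies this, after which the argument is just the bookkeeping of comparing coefficients of $x_s$ in increasing order.
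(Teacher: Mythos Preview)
Your argument is correct and follows essentially the same route as the paper: isolate the coefficient of the lowest remaining power of $x_s$ to obtain $l_k p_k=0$, invoke Proposition~\ref{deg-prod} in $S$ (using the bound $\deg l_k+\deg p_k<_+(\rho_1,\ldots,\rho_{s-1})$) to force $l_k=0$, and iterate. Your write-up is in fact more explicit than the paper's, which merely states that $l_0p_0=0$ and that ``the same reasoning can be applied step by step''; you spell out both the triangular system in the $x_s$-coefficients and the reason why the integral-domain shortcut fails without the degree hypothesis.
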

\begin{proof}Let $l_0(x_1,\ldots,x_{s-1}),\ldots, l_{\rho_{s-1}}(x_1,\ldots,x_{s-1})$ be polynomials  in\\ $\ef_q[x_1,\ldots,x_{s-1}]$ which verify the hypothesis of the proposition, such that\begin{equation*}
 \sum_{k=0}^{\rho_{s-1}}l_k(x_1,\ldots,x_{s-1})\mathfrak{p}_k(x_1,\ldots,x_s)=0.
\end{equation*}
Then
\begin{align*}
&l_0(x_1,\ldots,x_{s-1})p_0(x_1,\ldots,x_{s-1}) =0.
\end{align*}
 Supose that $l_0\neq 0$. By taking the degrees, we have, by Proposition \ref{deg-prod},
\begin{align}\label{deg-prod}
\begin{split}
  \deg(l_0p_0) = &\deg(l_0) + \deg(p_{0})>0.
 \end{split}
\end{align}
But this is impossible for a non-zero polynomial.
It follows that $l_0=0$ and using \eqref{pk}, the same reasoning can  be applied step by step to show that    $l_i$  for   $i=1,\ldots,\rho_{s-1}$.
\end{proof}
\begin{thm}\label{gen-s}With the previous notations, let $B$ be the set
\begin{align*}
B = &\{ x_1^{i^0_1}\ldots x_{s-1}^{i_{s-1}^0}\mathfrak{p_0}(x_1,\ldots,x_s)\;\vert\;(i^0_1,\ldots,i^0_{s-1})<_+(\rho_1,\ldots,\rho_{s-1})-a_0\}\\
\cup &\{ x_1^{i^1_1}\ldots x_{s-1}^{i_{s-1}^1}\mathfrak{p_1}(x_1,\ldots,x_s)\;\vert\;(i^1_1,\ldots,i^1_{s-1})<_+(\rho_1,\ldots,\rho_{s-1})-a_1\}\\
&\ldots\\
\cup &\{ x_1^{i^{r_{n-1}}_1}\ldots x_{s-1}^{i^{r_{n-1}}_{s-1}}\mathfrak{p}_{\rho_s-1}(x_1,\ldots,x_s)\;\vert\;\\
&\quad\quad\quad\quad\quad\quad\quad\quad\quad(i^{r_{n-1}}_1,\ldots,i^{r_{n-1}}_{s-1})<_+(\rho_1,\ldots,\rho_{s-1})-a_{\rho_s-1}\}.
\end{align*}
Then \\
$(1)$ The elements of $B$ are $\ef_q$-linearly  independent.\\
$(2)$ If $|B|={\rm log_q}|I|$, then $B$ is an $\ef_q$-basis of $I$ (where $|I|$ is the cardinality of $I$).
\end{thm}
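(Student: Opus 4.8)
I would reduce part $(1)$ entirely to Proposition~\ref{comb-nul-lemme} and then obtain part $(2)$ from $(1)$ by a dimension count. The guiding observation is that the degree bounds built into the definition of $B$ are designed to be exactly the hypotheses of Proposition~\ref{comb-nul-lemme}: once a vanishing $\ef_q$-linear combination of the elements of $B$ is reorganized into a single polynomial relation $\sum_k l_k\mathfrak{p}_k=0$ with each $\deg l_k$ suitably bounded, that proposition forces every $l_k=0$, from which the triviality of the combination is immediate.

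For $(1)$, I would start from an arbitrary $\ef_q$-linear relation
\[
\sum_{k=0}^{\rho_s-1}\ \sum_{(i_1^k,\ldots,i_{s-1}^k)<_+(\rho_1,\ldots,\rho_{s-1})-a_k} c_{k,i}\, x_1^{i_1^k}\cdots x_{s-1}^{i_{s-1}^k}\,\mathfrak{p}_k(x_1,\ldots,x_s)=0,
\]
with all $c_{k,i}\in\ef_q$, and, for each fixed $k$, collect the monomials multiplying $\mathfrak{p}_k$ into a single element of $S$,
\[
l_k(x_1,\ldots,x_{s-1})=\sum_{(i_1^k,\ldots,i_{s-1}^k)<_+(\rho_1,\ldots,\rho_{s-1})-a_k} c_{k,i}\, x_1^{i_1^k}\cdots x_{s-1}^{i_{s-1}^k}.
\]
The relation then reads $\sum_{k=0}^{\rho_s-1} l_k\mathfrak{p}_k=0$. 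Every monomial occurring in $l_k$ has multidegree $<_+(\rho_1,\ldots,\rho_{s-1})-a_k$, and since $\deg l_k$ is itself one of those multidegrees, we get $\deg l_k<_+(\rho_1,\ldots,\rho_{s-1})-a_k$, which is precisely the hypothesis of Proposition~\ref{comb-nul-lemme}. That proposition therefore yields $l_k=0$ for every $k$. For fixed $k$ the exponents $(i_1^k,\ldots,i_{s-1}^k)$ range over distinct tuples in $\{0,\ldots,\rho_1-1\}\times\cdots\times\{0,\ldots,\rho_{s-1}-1\}$, so the corresponding monomials belong to the standard $\ef_q$-basis of $S$ and are linearly independent; hence $l_k=0$ forces all $c_{k,i}=0$. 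Thus the only relation is the trivial one, and the elements of $B$ are $\ef_q$-linearly independent.

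For $(2)$, each element of $B$ is a monomial multiple of some $\mathfrak{p}_k\in I$, so $B\subseteq I$ and the $\ef_q$-span of $B$ is contained in $I$. Since $I$ is a finite $\ef_q$-vector space, $|I|=q^{\dim_{\ef_q} I}$, that is $\dim_{\ef_q} I=\log_q|I|$. By part $(1)$ the $\ef_q$-span of $B$ has dimension $|B|$. Hence if $|B|=\log_q|I|=\dim_{\ef_q} I$, then $B$ is a linearly independent subset of $I$ whose cardinality equals $\dim_{\ef_q} I$, so it must span $I$, and therefore $B$ is an $\ef_q$-basis of $I$.

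The genuinely substantive work has already been done, once and for all, in Proposition~\ref{comb-nul-lemme}: the bound $\deg l_k<_+(\rho_1,\ldots,\rho_{s-1})-a_k$ is what prevents the wrap-around caused by $x_i^{\rho_i}=1$ and keeps $\deg(l_kp_k)=\deg l_k+a_k$ via Proposition~\ref{deg-prod}, so that the coefficient of the lowest power of $x_s$ can be peeled off layer by layer. Within the present theorem the only point demanding care is the translation step: confirming that the strict bounds defining $B$ match the hypotheses of Proposition~\ref{comb-nul-lemme} exactly, and that no two of the basis monomials of $S$ used above collapse under the quotient identifications, which holds because each exponent stays strictly below the corresponding $\rho_j$.
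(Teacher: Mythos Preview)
Your proof is correct and follows essentially the same approach as the paper: you reorganize a vanishing $\ef_q$-linear combination into $\sum_k l_k\mathfrak{p}_k=0$, invoke Proposition~\ref{comb-nul-lemme} to conclude $l_k=0$, and then read off that all scalar coefficients vanish; part~(2) is the same dimension count. If anything, you are slightly more explicit than the paper in checking that the degree bounds on the $l_k$ match the hypothesis of Proposition~\ref{comb-nul-lemme} and that the monomials forming each $l_k$ are genuinely independent in $S$.
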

\begin{proof} (1) We construct the finite sequence of numbers
\begin{align*}
N_k=|\{ x_1^{i^k_1}\ldots x_{s-1}^{i_{s-1}^k}\mathfrak{p}_k(x_1,\ldots,x_s)\;\vert\;(i^k_1,\ldots,i^k_{s-1})<_+(\rho_1,\ldots,\rho_{s-1})-a_k|.
\end{align*}
for $k=0,\ldots,\rho_s-1$.
Now, let $(\alpha_j^k)_{1\leqslant j\leqslant N_k}$ be   sequences of elements of
$\ef_q$ such that
\begin{equation}\label{comb-nulle}
\sum_{k=0}^{\rho_s-1}\sum_{j=1}^{N_k}\alpha_j^k x_1^{i^k_1}\ldots x_{s-1}^{i_{s-1}^k}\mathfrak{p}_k(x_1,\ldots,x_s) = 0
\end{equation}
(this is a  linear combination of elements of $B$ which equals to zero).
By taking
\begin{equation*}
l_k(x_1,\ldots,x_{s-1}) = \sum_{j=1}^{N_k}\alpha_j^k x_1^{i^k_1}\ldots x_{s-1}^{i_{s-1}^k}
\end{equation*}
for $k=0,\ldots,\rho_s-1$, equation  \eqref{comb-nulle} becomes
\begin{equation*}
 \sum_{k=0}^{\rho_s-1}l_k(x_1,\ldots,x_{s-1})\mathfrak{p}_k(x_1,\ldots,x_s)=0.
\end{equation*}
By Proposition \ref{comb-nul-lemme}, we have $l_k(x_1,\ldots,x_{s-1})=0$ for $k=0,\ldots,\rho_{s-1}$, i.e. $\alpha_j^k=0$ for $j=1,\ldots,N_k$. \\
(2)  The ring $R$ is isomorphic to  a subspace of $\ef_q^n$, by the mapping
\begin{align*}
\begin{split}
R&\longleftrightarrow\ef_q^n\\
f(x)=\sum_{\alpha\in\mathcal{G}_s}f_\alpha x^\alpha&\longleftrightarrow (f_\alpha)_{\alpha\in\mathcal{G}_s},
\end{split}
\end{align*}
where   $n=\prod_{i=1}^s{\rho_i}$. Thus, $I$ may be identified with a subspace of $\ef_q^n$, and it is known that  in this case, $\dim I={\rm log_q }|I|$. The elements of the  set $B$  being linearly independent, it follows that $B$ is a basis of $I$, when its when its cardinality equals to ${\rm log_q }|I|$.
\end{proof}
For an $\ef_q$-basis $B=\{g_1(x),\ldots,g_l(x)\}$ of $I$, where, according to \eqref{f-ordered}
\begin{equation*}
   g_\lambda(x)= g_{\lambda\alpha^{(1)}}x^{\alpha^{(1)}}+\cdots+ g_{\lambda\alpha^{(i)}}x^{\alpha^{(1)}}+\cdots+g_{\lambda\alpha^{(\rho_s)}}x^{\alpha^{(n)}}\;\text{for}\;\lambda=1,\ldots,l.
\end{equation*}
A generator matrix for $I$, as a multicyclic code is then
\begin{equation*}
  G= \left(
       \begin{array}{ccccc}
         g_{1\alpha^{(1)}}  & \ldots  &g_{1\alpha^{(\nu)}}& \ldots  & g_{1\alpha^{(n)}}  \\
          \vdots &\vdots   & \vdots  & \vdots&\vdots  \\
          g_{\lambda\alpha^{(1)}}  &  \ldots & g_{\lambda\alpha^{(\nu)}}&\ldots  &  g_{\lambda\alpha^{(n)}}  \\
           \vdots&\vdots   &\vdots   &\vdots &\vdots  \\
           g_{l\alpha^{(1)}}& \ldots   &g_{l\alpha^{(\nu)}}&\ldots  &g_{l\alpha^{(n)}}
       \end{array}
     \right)  \in\ef_q^{l,n},
\end{equation*}
where $\ef_q^{l,n}$ is the set of matrices with $l$ rows and $n$ columns and entries in $\ef_q$. In other words, $G$ is the matrix whose rows are the coefficients of the elements of $B$.
\begin{center}{\textbf{Acknowledgments}}
\end{center}
 The authors would like to thank the referee for careful reading. \\ \\
\vskip 0.4 true cm

\bibliographystyle{amsplain}

\bigskip
\bigskip

{\footnotesize {\bf First Author}\; \\ {Department of
Mathematics}, {University
of Antananarivo, p.O.Box 906,} {101 Antananarivo, Madagascar,}\\
{And \\
Higher Polytechnics Institute of
Madagascar (ISPM)}, { Ambatomaro Antsobolo,}\\ { 101 Antananarivo, Madagascar.}\\
{\tt Email: andriamifidisoa.ramamonjy@univ-antananarivo.mg}\\

{\footnotesize {\bf Second Author}\; \\ {Department of
Mathematics and Computer Science}, {University
of Antananarivo, p.O.Box 906,} {101 Antananarivo, Madagascar.}\\
{\tt Email: larissamarius.lm@gmail.com}\\

{\footnotesize {\bf Third Author}\; \\ {Department of
Mathematics}, {University
of Antananarivo, p.O.Box 906,} {101 Antananarivo, Madagascar.}\\
{\tt Email: rabeherimanana.toussaint@yahoo.fr}\\

\end{document}